\newtheorem{openproblem}{Open Problem}
\newbox{\myorcidauthbox}
\sbox{\myorcidauthbox}{\large\includegraphics[height=1.7ex]{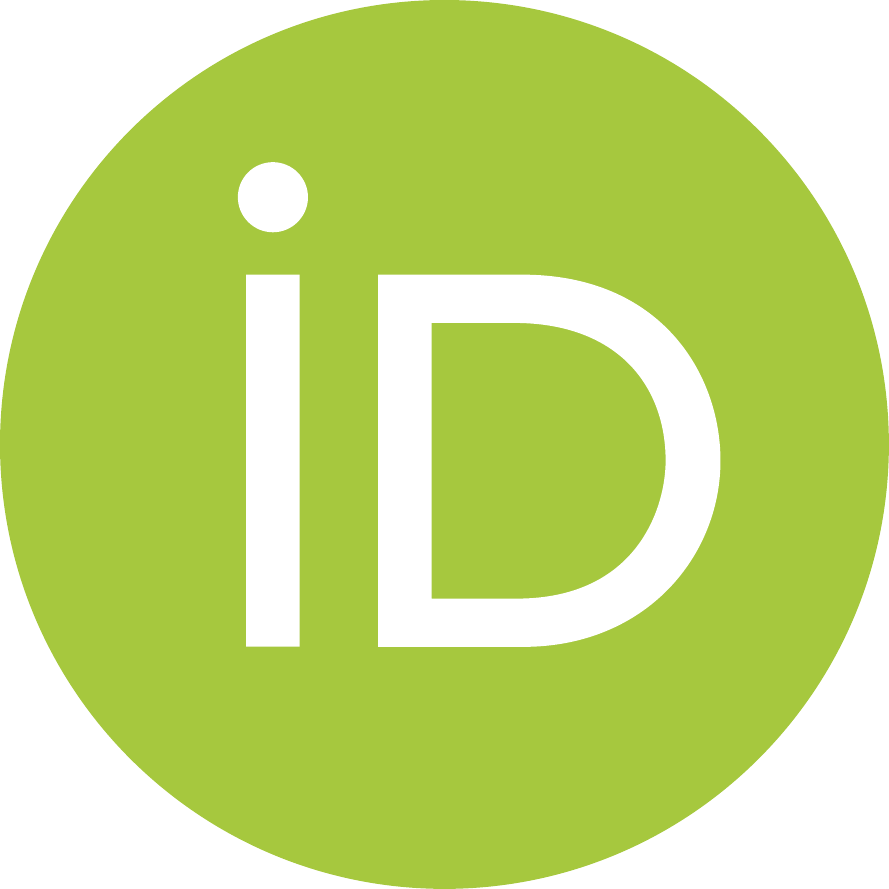}}
\newcommand{\orcid}[1]{\href{https://orcid.org/#1}{\usebox{\myorcidauthbox}}}
\renewcommand{\times}{x}
\title{Shooting Stars in Simple Drawings of~$K_{m,n}$\thanks{O.A., I.P., and A.W. were partially supported by the Austrian Science Fund (FWF) grant W1230.
A.G. was supported by MINECO project MTM2015-63791-R and Gobierno de Arag\'on under Grant E41-17 (FEDER).
I.P.\, and B.V. were partially supported by the Austrian Science Fund within the collaborative DACH project \emph{Arrangements and Drawings} as FWF project \mbox{I 3340-N35}. 
I.P. was supported by the Margarita Salas Fellowship funded by the Ministry of Universities of Spain and the European Union (NextGenerationEU).\\
This work initiated at the 6th Austrian-Japanese-Mexican-Spanish Workshop on Discrete Geometry which took place in June 2019 near Strobl, Austria. 
We thank all the participants for the great atmosphere and fruitful discussions.}
}
\author{Oswin~Aichholzer\inst{1}\orcid{0000-0002-2364-0583} \and
	Alfredo Garc\'ia\inst{2}\orcid{0000-0002-6519-1472}
	Irene~Parada\inst{3,4}\orcid{0000-0003-3147-0083} \and
	Birgit~Vogtenhuber\inst{1}\orcid{0000-0002-7166-4467} \and
	Alexandra~Weinberger\inst{1}\orcid{0000-0001-8553-6661}
}
\authorrunning{O. Aichholzer et al.}
\institute{Institute of Software Technology, Graz University of Technology, Graz, Austria\\
	\email{\{oaich,bvogt,weinberger\}@ist.tugraz.at} \and
	Departamento de M\'etodos Estad\'isticos and IUMA, Universidad de Zaragoza, Zaragoza, Spain\\
	\email{olaverri@unizar.es} \and
	Departament de Matem\`atiques, Universitat Polit\`ecnica de Catalunya, Barcelona, Spain\\
	\email{irene.maria.de.parada@upc.edu} \and
	Department of Information and Computing Sciences, Utrecht University, Utrecht, The Netherlands
}
\begin{document}

\maketitle

\begin{abstract}
	Simple drawings are drawings of graphs in which two edges have at most one common point (either a common endpoint, or a proper crossing). 
	It has been an open question whether every simple drawing of a complete bipartite graph $K_{m,n}$ contains a plane spanning tree as a subdrawing. 	
	We answer this question to the positive by showing that for every simple drawing of~$K_{m,n}$ and for every vertex $v$ in that drawing, 
	the drawing contains a \emph{shooting star rooted at~$v$}, that is, a plane spanning tree containing all edges incident to~$v$.  
	\keywords{Simple drawing \and Simple topological graph \and Complete bipartite graph \and Plane spanning tree \and Shooting star}
\end{abstract}

\section{Introduction}
A \emph{simple drawing} is a drawing of a graph on the sphere $S^2$ or, equivalently, in the Euclidean plane where 
\begin{enumerate*}[label=(\arabic*)] 
\item the vertices are distinct points in the plane,
\item  the edges are non-self-intersecting continuous curves connecting their incident points, 
\item no edge passes through vertices other than its incident vertices, 
\item\label{prop:simple} and every pair of edges intersects at most once, either in a common endpoint, or in 
	the relative interior of both edges, forming a proper crossing.
\end{enumerate*} 
Simple drawings are also called \emph{good drawings}~\cite{AMRS18,EG_1973} or \emph{(simple) topological graphs}~\cite{kyncl2009,PST03}. 
In \emph{star-simple drawings}, the last requirement is softened so
that 
edges without common endpoints are allowed to cross several times. 
Note that in any simple or star-simple drawing, there are no tangencies between edges and incident edges do not cross.
If a drawing does not contain any crossing at all, it is called \emph{plane}.

The search for plane subdrawings of a given drawing has been a widely considered topic for simple drawings of the complete graph $K_{n}$ which still holds tantalizing open problems. 
For example, Rafla~\cite{rafla} conjectured that every simple drawing of $K_n$ contains a plane Hamiltonian cycle, a statement which is by now known to be true for $n \leq 9$~\cite{aafhpprsv-agdsc-15} and several classes of simple drawings (e.g., 2-page book drawings, monotone drawings, cylindrical drawings), but still remains open in general.
A related question concerns the least number of pairwise disjoint edges in any simple drawing of $K_n$. 
The currently best lower bound is $\Omega(n^{1/2})$\cite{twisted_socg}, which is improving over several previous bounds~\cite{bound_2009,bound_2014,Fulek:2013:TGE:2493132.2462394,PST03,Pach:2003:DET:2164311.2164326,RuizVargas17,Suk2013}, while the trivial upper bound of $n/2$ would be implied by a positive answer to Rafla's conjecture.
A structural result of Fulek and Ruiz-Vargas~\cite{Fulek:2013:TGE:2493132.2462394} implies that every simple drawing of~$K_n$ contains a plane sub-drawing with at least $2n-3$ edges.

We will focus on plane trees. Pach et al.~\cite{PST03} proved that every simple drawing of $K_n$ contains a plane drawing of any fixed tree with at most $c\log^{1/6}n$ vertices. For paths specifically, every simple drawing of $K_n$ contains a plane path of length $\Omega(\frac{\log n}{\log \log n})$~\cite{twisted_socg,unavoidable}. Further, it is trivial that simple drawings and star-simple drawings of $K_n$ contain a plane spanning tree, because every vertex is incident to all other vertices and adjacent edges do not cross. Thus, the vertices together with all edges incident to one vertex form a plane spanning tree. We call this subdrawing the \emph{star} of that vertex.

In this work, we consider the search for plane spanning trees in drawings of complete bipartite graphs. 
Finding plane spanning trees there is more involved than for~$K_n$.
In fact, not every star-simple drawing of a complete bipartite graph contains a plane spanning tree; see Fig.~\ref{fig:star}.

\begin{figure}[htb]
	\centering
	\includegraphics[page=1]{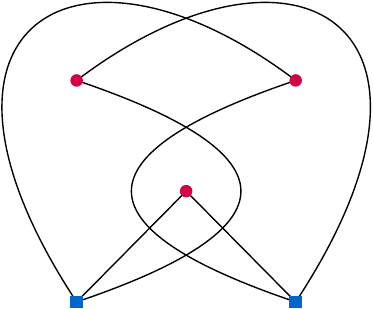}
	\caption{Star-simple drawing of $K_{2,3}$ that does not contain a plane spanning tree.}
	\label{fig:star}
\end{figure}

It is not hard to see that straight-line drawings of complete bipartite graphs
always contain plane spanning trees.  
Consider the star of an arbitrary vertex~$v$. 
The prolongation of these edges creates a set of rays originating at $v$ that partitions the plane into wedges, 
which we divide into two parts using the angle bisectors. 
We connect the vertices in each part of a wedge
to the point on the ray that bounds it.
These connections together
with the star of~$v$ form a plane spanning tree 
of a special type called \emph{shooting star}. 
A \emph{shooting star rooted at~$v$} is a plane spanning tree with root~$v$ that has height 2 and contains the star of vertex~$v$.
Aichholzer et al. showed in~\cite{euroCG_version_19} that simple drawings of $K_{2,n}$ and $K_{3,n}$, as well as so-called \emph{outer drawings} of $K_{m,n}$, always contain shooting stars. 
\emph{Outer drawings of $K_{m,n}$}~\cite{CardinalFelsner2018} are simple drawings in which all vertices of one 
bipartition class  
lie on the outer boundary.

\medskip \noindent{\bf Results.\, }
We
show in Section~\ref{sec:proof_shooting} that every simple drawing of $K_{m,n}$ contains shooting stars rooted at an arbitrary vertex of~$K_{m,n}$.
The tightness of the conditions is shown in Section \ref{sec:tightness} and 
in Section~\ref{sec:algo_remarks} we discuss algorithmic aspects.

\section{Existence of Shooting Stars}
\label{sec:proof_shooting}

In this section, we prove our main result, the existence of shooting stars: 

\begin{theorem}\label{theorem1}
	Let $D$ be a simple drawing of $K_{m,n}$ and 
	let $r$ be an arbitrary vertex of $K_{m,n}$. 
	Then $D$ contains a shooting star rooted at $r$.
\end{theorem}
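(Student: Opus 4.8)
The plan is to fix the star of $r$ and reduce the theorem to a purely combinatorial selection problem inside a single disk. Let $B$ denote the bipartition class not containing $r$, so that the star $S_r$ of $r$ consists of all edges from $r$ to the vertices of $B$; call these $w_1,\dots,w_n$. Since a shooting star rooted at $r$ must contain $S_r$ and its height is~$2$, every vertex of $r$'s own class other than $r$, say $u_1,\dots,u_{m-1}$, must sit at level~$2$, attached to some level-$1$ vertex, which is necessarily a vertex of $B$. Hence the tree we seek is obtained by selecting, for each $u_i$, a single edge $u_iw$ with $w\in B$ so that (i) no selected edge crosses $S_r$, and (ii) no two selected edges cross each other. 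Any such selection yields a connected, acyclic, plane subgraph on all $m+n$ vertices with exactly $m+n-1$ edges and height~$2$, that is, a shooting star; so it suffices to produce one.

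To handle condition (i), I would work on the sphere and use that $S_r$, being a tree, bounds a single face $F$, which is an open disk; all of $u_1,\dots,u_{m-1}$ lie in $F$, and $w_1,\dots,w_n$ appear on $\partial F$ in the rotation order of the edges around $r$. An edge $u_iw$ then satisfies (i) exactly when it stays inside $\bar F$. For a single $u_i$, the existence of at least one such \emph{star-avoiding} edge is precisely the statement that the simple drawing of $K_{2,n}$ induced by $\{r,u_i\}\cup B$ contains a shooting star rooted at $r$, which is already known. Thus each $u_i$ has a nonempty set $A_i\subseteq B$ of admissible leaves, and the problem reduces to choosing one leaf per $u_i$ from its $A_i$ so that the resulting edges are pairwise non-crossing inside $F$. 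A useful first observation is that edges selected to the \emph{same} leaf $w$ pairwise share the endpoint $w$ and hence never cross; only edges running to distinct leaves can conflict, so the real difficulty lies in avoiding conflicts between different fans.

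For the simultaneous selection I would argue globally rather than greedily. Starting from an arbitrary admissible assignment, I would take one minimizing the total number of crossings among the selected edges and show that this minimum is~$0$. The intended engine is a local exchange inside $F$: if two selected edges $u_iw$ and $u_jw'$ with $w\neq w'$ cross, then, exploiting that both curves lie in $\bar F$ and cross exactly once, the aim is to reroute $u_i$ and/or $u_j$ to admissible leaves so that the total crossing count strictly decreases. That a non-crossing admissible choice exists for any single pair is guaranteed by the known existence of shooting stars in the induced $K_{3,n}$ on $\{r,u_i,u_j\}\cup B$; the task is to realize this exchange without creating new crossings with the other selected edges. Termination would then give a crossing-free admissible selection and hence the desired shooting star.

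The main obstacle is exactly this global step: turning the pairwise guarantees coming from $K_{2,n}$ and $K_{3,n}$ into a single \emph{consistent} choice across all $m-1$ vertices, while ensuring that resolving one crossing does not cascade into new ones elsewhere. I expect the heart of the argument to be a careful structural analysis of how star-avoiding edges to distinct leaves interleave along the boundary circle of $F$, showing that the admissible leaf-sets behave in a sufficiently nested, non-crossing manner to support an exchange that is monotone in the total crossing count. By contrast, the reachability lemma (nonemptiness of each $A_i$) and the reduction to a single disk are routine; the consistent simultaneous routing is where the work lies.
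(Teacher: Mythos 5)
Your reduction is sound as far as it goes: fixing the star $S_r$, observing that its complement is a single disk $F$ with the other red vertices inside and the blue vertices on its boundary, and deriving nonemptiness of each admissible set $A_i$ from the known $K_{2,n}$ case are all correct. But the proof stops exactly at the theorem's actual content. The statement you need --- that an admissible assignment minimizing the total number of crossings among the selected edges has zero crossings --- is asserted as a plan, not proved, and you say so yourself (``the task is to realize this exchange without creating new crossings,'' ``the main obstacle is exactly this global step,'' ``I expect the heart of the argument to be\dots''). The pairwise guarantee from $K_{3,n}$ does not bootstrap to a global one: if $u_iw$ and $u_jw'$ cross, the natural swap to $u_iw'$ and $u_jw$ may leave $\bar F$ (i.e., be inadmissible), and rerouting either vertex to a different leaf in its $A_i$ can create arbitrarily many new crossings with the other $m-3$ selected edges, so there is no evident potential function that your local move decreases. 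Without the ``careful structural analysis of how star-avoiding edges interleave'' that you defer, there is no proof.

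It is worth noting that the paper's argument suggests a purely local two-edge exchange is unlikely to suffice. The paper instead inducts on $n=|B|$: it deletes one blue vertex $b_1$, takes a shooting star of the remaining $K_{m,n-1}$ minimizing the number of crossings with the \emph{single} edge $rb_1$ (not the total crossing number of an assignment), and shows this minimizer is crossed zero times by $rb_1$. Even there, the uncrossing step is not a swap of two edges: it carves out a region $\Gamma$ bounded by pieces of $rb_1$, of the offending edge, and of an edge into $b_1$, and then \emph{re-solves the whole problem inside $\Gamma$ by a second application of the induction hypothesis}, simultaneously rerouting every red vertex in $\Gamma$ to blue vertices inside $\Gamma$. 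That global rerouting within a carefully chosen subregion is the mechanism that guarantees no new crossings with $rb_1$ are created; it is precisely the ingredient your exchange step is missing. If you want to pursue your fixed-star formulation, you would need an analogous region-based rerouting (and some induction to populate it), at which point you have essentially reconstructed the paper's proof in different clothing.
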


\begin{proof}
We can assume that $D$ is drawn 
on a point set $P = R \cup B$, $R=\{r_1,\ldots ,r_{m}\}$, $B=\{ b_1,\ldots ,b_{n}\} $, 
in which the points in the two  
bipartition classes $R$ and $B$ 
are colored red and blue, respectively.
Without loss of generality let $r= r_1$. 

\begin{figure}[tb]
	\centering
	\includegraphics[page=2]{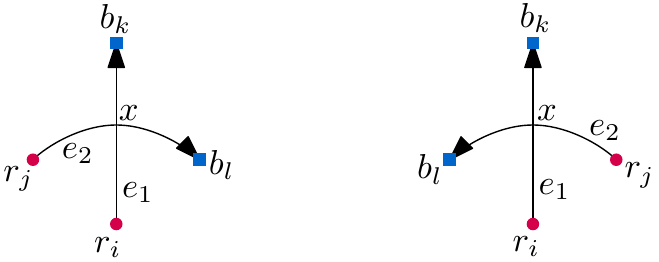}
	\caption{Left: Simple drawing of $K_{3,3}$. Right: its stereographic projection from $r_1$.} \label{fig:projection}
\end{figure}

To simplify the figures, we consider the drawing $D$ on the sphere
and apply a stereographic projection from $r$ onto a plane. 
In that way, the edges in the star of $r$ are represented as (not necessarily straight-line) infinite rays; see Fig.~\ref{fig:projection}. 
We will depict them in blue. In the following, we consider all edges oriented from their red to their blue endpoint.
To specify how two edges cross each other, we introduce some notation. 
Consider two crossing edges $e_1=r_ib_k$ and  $e_2=r_jb_l$ and let $\times$ be their crossing point. Consider the arcs $\times r_i$ and $\times b_k$ on $e_1$ 
and $\times r_j$ and $\times b_l$ on~$e_2$.  
We say that $e_2$ crosses $e_1$ in \emph{clockwise direction} 
if the clockwise cyclic order of these arcs around the crossing $\times$ is $\times r_i$, $\times r_j$, $\times b_k$, and $\times b_l$. Otherwise, we say that $e_2$ crosses $e_1$ in \emph{counterclockwise direction}; see Fig.~\ref{fig:clockwise}. \begin{figure}[t]
	\centering
	\includegraphics[page=1]{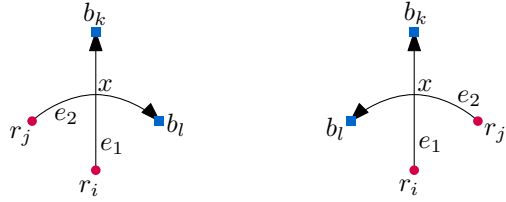}
	\caption{Left: $e_2$ crosses $e_1$ in clockwise direction. Right: $e_2$ crosses $e_1$ in counterclockwise direction.}
\label{fig:clockwise}
\end{figure}

We prove Theorem~\ref{theorem1} by induction on $n$. 
For $n=1$ and any $m\geq 1$, the whole drawing $D$ is a shooting star rooted at any vertex, and in particular at $r$. 

Assume that the existence of shooting stars rooted at any vertex has been proven for any simple drawing of $K_{m,n'}$ with $n'<n$.
By the induction hypothesis, the subdrawing of~$D$ obtained by deleting the blue vertex $b_1$ and its incident edges contains at least one shooting star rooted at~$r$.
Of all such shooting stars, let $S$ be one whose edges have the minimum number of crossings with $r{b_1}$, 
and let $M$ be the set of edges of $S$ that are not incident to $r$. 
We will show that $S\cup\{r{b_1}\}$ is plane and hence forms the desired shooting star.
Note that it suffices to show that $M\cup\{r{b_1}\}$ is plane, since $r{b_1}$ cannot cross any edges of $\{ \bigcup _{j=2}^n rb_{j}\}$ in any simple drawing.

Assume for a contradiction that $r{b_1}$ crosses at least one edge in $M$. 
When traversing $rb_1$ from $b_1$ to $r$, 
let $x$ be the first crossing point of $rb_1$ with an edge $r_kb_t$ in $M$. 
W.l.o.g., when orienting $rb_1$ from $r$ to $b_1$ and  
$r_kb_t$ from $r_k$ to $b_t$, $r_kb_t$ crosses $rb_1$ in counterclockwise direction
(otherwise we can mirror the drawing). 

Suppose first that the arc $r_kx$ (on $r_kb_t$ and oriented from $r_k$ to $x$) is crossed in counterclockwise direction 
by an edge incident to $b_1$ (and oriented from the red endpoint to $b_1$). 
Let $e=r_lb_1$ be such an edge whose crossing with $r_kx$
at a point $y$ is the closest to $x$. 
Otherwise, let $e$ be the edge $r_kb_1$ and $y$ be the point~$r_k$. 
In the remaining figures, we represent in blue the edges of the star of $r$, 
in red the edges in $M$, and in black the edge $e$.  

We distinguish two cases depending on whether $e$ crosses an edge of the star of $r$. 
The idea in both cases is to define a region $\Gamma$ and, inside it, 
redefine the connections between red and blue points to reach a contradiction.  

	\begin{figure}[thb]
		\centering
		\includegraphics[page=1]{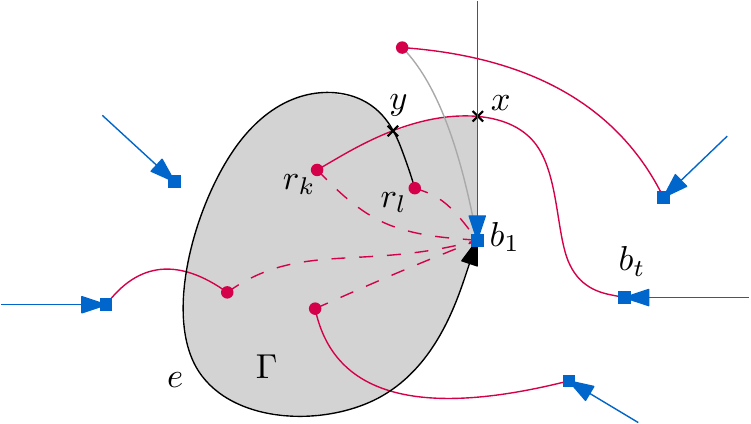}
		\caption[Illustration of Case 1  in the proof of Theorem~\ref{theorem1}.]{Illustration of Case 1.}
		\label{fig:Case1}
	\end{figure}
\smallskip
	\noindent{\bf Case 1: } 
	$e$ does not cross any edge of the star of $r$.
	Let $\Gamma$ be the closed region of the plane bounded by the arcs $yb_1$ (on $e$), $b_1x$ (on $rb_1$), and $xy$ (on $r_kb_t$); see Fig.~\ref{fig:Case1}. 
	Observe that all the blue points $b_j$ lie outside the region $\Gamma$ and that for all the red points $r_i$ inside region $\Gamma$, the edge $r_ib_1$ must be in $\Gamma$. 
	Let $M_\Gamma$ denote the set of edges $r_ib_1$ with $r_i\in \Gamma$ and 
	note that $r_kb_1\in M_\Gamma$. 
	Consider the set $M'$ of red edges obtained from $M$ by replacing, 
	for each red point $r_i\in \Gamma$, 
	the (unique) edge incident to $r_i$ in $M$ by the edge $r_ib_1$ in $M_\Gamma$, 
	and keeping the other edges in $M$ unchanged. 
	In particular, the edge $r_kb_t$ has been replaced by the edge $r_kb_1$. 
	The edges in $M_\Gamma$ neither cross each other nor cross any of the blue edges $r{b_j}$. 
	Moreover, we now show that the non-replaced edges in $M$ must lie completely outside $\Gamma$.  
	These edges can neither cross $r_kb_t$ (by definition of $M$) nor the arc $b_1x$ (on $rb_1$).  
	Thus, if they are incident to $b_1$, they cannot cross the boundary of~$\Gamma$;
	If they are not incident to $b_1$, both their endpoints lie outside $\Gamma$ and they can only cross the boundary of $\Gamma$ at most once (namely, on the arc $b_1y$).  
Therefore, $M'$ satisfies that $M'\cup \{ \bigcup _{j=2}^n r{b_{ j}}\}$ is plane and has fewer crossings with~$r{b_1}$ than $M$, since at least the crossing $x$ has been eliminated and no new crossings have been added.
This contradicts the definition of $M$ as the one with the minimum number of crossings with $rb_1$.

\smallskip	
\noindent{\bf Case 2: } 
	$e$ crosses the star of $r$. 
When traversing $e$ from $r_k$ or $r_l$ (depending on the definition of $e$) to $b_1$, 
	let $I=\{\alpha,\beta,\ldots ,\rho\} $ be the indices of the edges of the star of $r$ in the order as they are 
	crossed by~$e$ and let $y_{\alpha},\ldots ,y_{\rho}$ be the corresponding crossing points on $e$. 
	Note that, when orienting $e$ from $r_k$ or $r_l$ to $b_1$, the edges $rb_\xi, \xi \in I$, 
	oriented from $r$ to $b_\xi$, cross $e$ in counterclockwise direction, 
	since they can neither cross $r_kb_t$ (by definition of $M$) nor $rb_1$.

	 The three arcs $ry_{\alpha}$ (on $rb_\alpha$), $y_{\alpha}b_1$ (on $e$), and $b_1r$ 
	 divide the plane into two (closed) regions, 
	 $\Pi_\text{left}$, containing vertex $r_k$, and $\Pi_\text{right}$, containing vertex $b_t$. 
	 For each $\xi\in I$, let $M_\xi$ be the set of red edges of $M$ incident to some red point in $\Pi_\text{right}$ and to $b_{\xi}$. 
	 Note that all the edges in $M_\xi$ (if any) must cross the edge~$e$. 
	 When traversing $e$ from $r_k$ or $r_l$ to $b_1$,
	 we denote by $x_\xi,z_\xi$ the first and the last crossing points of $e$ with the edges of $M_\xi\cup r{b_\xi}$, 
	 respectively; see Fig.~\ref{fig:Case2} for an illustration.   	
	 We remark that both $x_\xi$ and $z_\xi$ might coincide with $y_\xi$ and, in particular, if $M_\xi=\emptyset $ then $x_\xi=y_\xi=z_\xi$.\begin{figure}[htb]
	 	\centering
	 	\includegraphics[page = 1]{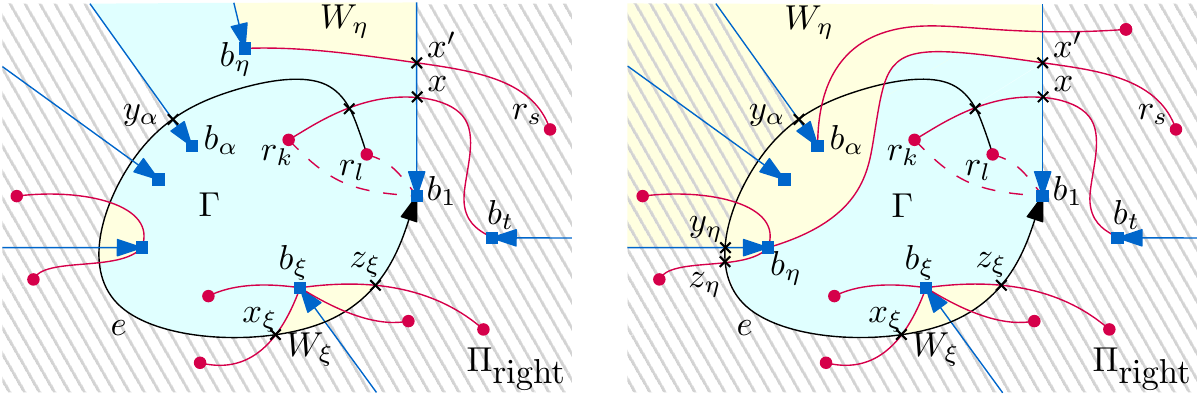}
	 	\caption[Illustration of Case 2 in the proof of Theorem~\ref{theorem1}.]
	 	{Illustration of Case 2. 
	 		Region $\Pi_\text{right}$ is striped in gray, region $\Gamma$ is shaded in blue, and regions in $\bigcup_{\xi\in I} W_\xi \cup W_{\eta}$ are shaded in yellow. 
	 		Left: $b_\eta$ does not cross $e$ ($\eta\notin I$). Right: $b_\eta$ crosses $e$ ($\eta\in I$).
}
	 	\label{fig:Case2} 
	 \end{figure}
	 
	 We now define some regions in the drawing $D$. 
	 Suppose first that there are edges in~$M$ (oriented from the red to the blue point) 
	 that cross $r{b_1}$ (oriented from $r$ to $b_1$) in clockwise direction. 
	 Let $r_sb_{\eta}$ be the edge in $M$ whose clockwise crossing with $r{b_1}$
	 at a point $x'$ is the closest one to $x$ (recall that the arc $b_1x$ on $rb_1$ is not crossed by edges in $M$). 
	 Then, if $\eta \notin I$, we denote by $W_{\eta}$ the region bounded by the arcs 
	 $rx'$ (on $rb_1$), $x'b_{\eta}$ (on $r_sb_\eta$), and $rb_{\eta}$ and not containing $b_1$; 
	 see Fig.~\ref{fig:Case2}~(left). If $\eta\in I$, we define $W_{\eta}$ as the region bounded by the arcs  
	 $rx'$ (on $rb_1$), $x'b_{\eta}$ (on $r_sb_\eta$), $b_{\eta}z_{\eta}$, $z_{\eta}y_{\eta}$ (on~$e$), and $y_{\eta}r$ (on~$rb_\eta$) and not containing~$b_1$;  
	 see Fig.~\ref{fig:Case2}~(right). If no edges in $M$ cross $r{b_1}$ in clockwise direction, then $\eta$ is undefined and
	 we set $W_{\eta}= \emptyset$ for convenience. Moreover, for each $\xi \in I\setminus \{\eta\}$, 
	 we define $W_\xi$ as the region bounded by the arcs $x_\xi b_\xi$, $b_\xi z_\xi$, and $z_\xi x_\xi$ 
	 (and not containing $b_1$); see again Fig.~\ref{fig:Case2}. 

	 We can finally define the region $\Gamma$ for Case 2, which is the region obtained from $\Pi_\text{left}$ by removing the interior of all the regions $W_\xi$, $\xi\in I$ 
	 plus region $W_{\eta}$ if $\eta\notin I$ (otherwise it is already contained in $\bigcup_{\xi\in I} W_\xi$). Now consider the set of red and blue vertices contained in 
$\Gamma$. 
	 Let $J$ denote the set of indices such that for all $j\in J$, the blue point $b_j$ lies in $\Gamma$ 
	 (note that $1\in J$). 
	 Since $b_t$ is not in $\Gamma$, we can apply the induction hypothesis to the subdrawing of $D$ induced by the vertices in $\Gamma$ plus $r$.
Hence there exists a
set of edges $M_\Gamma$ connecting 
	 each red point in $\Gamma$ with a blue point $b_{j}$, $j\in J$ such that
	 $M_\Gamma\cup \{ \bigcup _{j\in J}r{b_{j}}\} $ is plane. Moreover, all the edges in $M_\Gamma$ lie entirely in~$\Gamma$:
	 An edge in $M_\Gamma$ cannot cross any of the edges $rb_j$, with $j\in J$. 
	 Thus, it cannot leave $\Pi_\text{left}$, as otherwise it would cross $e$ twice. 
	 Further, if it entered one of the regions in $\bigcup_{\xi\in I} W_\xi \cup W_{\eta}$, it would have to leave it crossing $e$, 
	 and then it could not re-enter $\Gamma$.

	Consider the set $M'$ of red edges obtained from $M$ by replacing, 
	for each red point $r_i\in \Gamma$, 
	the edge $r_ib_\xi$ in $M$ by the edge $r_ib_j$, $j\in J$, in $M_\Gamma$, 
	and keeping the other edges in $M$ unchanged. 
	In particular, the edge $r_kb_t$ has been replaced by some edge $r_kb_j$, $j\in J$. 
	The edges in $M_\Gamma$ neither cross each other nor cross any of the blue edges $r{b_j}$, $j\in J$ 
	nor any of the other ones, lying completely outside $\Gamma$. 	
	Moreover, the non-replaced edges in $M$ cannot enter $\Gamma$ 
	since the only boundary part of $\Gamma$ that they can cross are arcs on $e$.  
	Therefore, $M'$ satisfies that $M'\cup \{ \bigcup _{j=2}^n r{b_{ j}}\}$ is plane and has fewer crossings with $r{b_1}$ than $M$.
This contradicts the definition of $M$ as the one with the minimum number of crossings with $rb_1$.
\qed
\end{proof}

\section{Some Observations on Tightness}
\label{sec:tightness}
There exist simple drawings of $K_{m,n}$ in which every plane subdrawing has at most as many edges as a shooting star. For example, consider a straight-line drawing of $K_{m,n}$ where all vertices are in convex position such that all red points are next to each other in the convex hull; see Fig.~\ref{fig:tight}~(left). The convex hull is an $(m+n)$-gon which shares only two edges with the drawing of $K_{m,n}$; see Fig.~\ref{fig:tight}~(right). All other edges of the drawing of $K_{m,n}$ are diagonals of the polygon. As there can be at most $(m+n)-3$ pairwise non-crossing diagonals in a convex $(m+n)$-gon, any plane subdrawing of this drawing of $K_{m,n}$ contains at most $m+n-1$ edges.

\begin{figure}[t]
		\centering
		\includegraphics[]{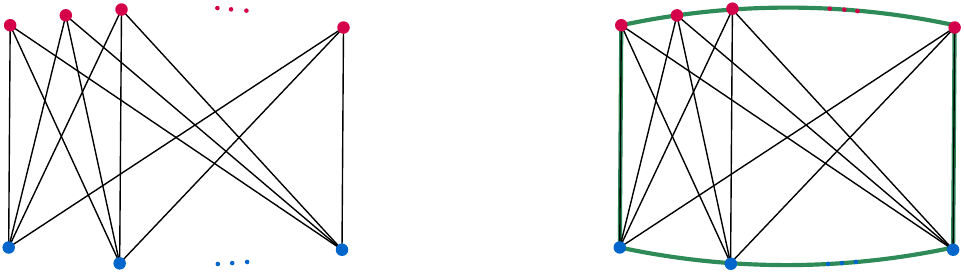}
		\caption{Left: A simple drawing of $K_{m,n}$ where no plane subdrawing has more edges than a shooting star. Right: convex $(n+m)$-gon on the convex hull (green).
}
		\label{fig:tight}
	\end{figure}

Furthermore, both requirements from Theorem~\ref{theorem1}---the drawing being simple and containing a complete bipartite graph---are in fact necessary:
As mentioned in the introduction, not all star-simple drawings of $K_{m,n}$ contain a plane spanning tree. 
Further, 
if in the example in Fig.~\ref{fig:tight} (left), 
we delete one of the two edges of $K_{m,n}$ on the boundary of the convex hull, 
then any plane subdrawing has at most $m+n-2$ edges and hence it cannot contain any plane spanning tree.

\section{Computing Shooting Stars}
\label{sec:algo_remarks}

The proof of Theorem~\ref{theorem1} contains an algorithm with which we can find shooting stars in given simple drawings. 
We start with constructing the shooting star for a subdrawing that is a~$K_{m,1}$ and then inductively add more vertices. Every time we are adding a new vertex, the shooting star of the step before is a set fulfilling all requirements of~$M_1\cup \{ \bigcup _{j=2}^n rb_{j}\} $ in the proof. By replacing edges as described in the proof, we obtain a new set with the same properties and fewer crossings. We continue replacing edges until we obtain a set of edges ($M$ in the proof) that form a shooting star for the extended vertex set. 
We remark that the runtime of this algorithm might be exponential, as finding the edges of~$M_\Gamma$ might require solving the problem for the subgraph induced by~$\Gamma$.
However, we believe that there exists a polynomial-time algorithm for this task.

\begin{openproblem}
	Given a simple drawing of $K_{m,n}$, is there a polynomial-time algorithm to find a plane spanning tree contained in the drawing?
\end{openproblem}

For some relevant classes of simple drawings of $K_{m,n}$ we can efficiently compute shooting stars. 
This is the case of outer drawings. 
In~\cite{euroCG_version_19} it was shown that these drawings contain shooting stars and this existential proof leads directly to a polynomial-time algorithm to find shooting stars in outer drawings. 
In Appendix~\ref{ap:monotone} we show that \emph{monotone drawings} of $K_{m,n}$, which are simple drawings in which all edges are $x$-monotone curves, admit an efficient algorithm for computing a shooting star. 
Fig.~\ref{fig:monotone} shows an illustration. 
The idea is as follows.
Let the sides of the bipartition be $R$ and $B$ and 
let $v$ be the leftmost vertex (without loss of generality assume $r\in R$).
We first consider the star of $r$, which we denote by $T$. 
For each vertex $w\in R$ not in $T$ we shoot two vertical rays, one up and one down. 
If only one of those vertical rays intersects $T$ we connect $w$ with the endpoint in $B$ of the first intercepted edge. 
If both vertical rays intersect $T$ we consider the endpoints in $B$ of the first edge intercepted by the upwards and the downwards ray. 
We connect $w$ with the horizontally closest one of the two.
If neither of the rays intersects $T$ we connect~$w$ with the horizontally closest vertex in $B$. 
In Appendix~\ref{ap:monotone} we prove that this indeed constructs a shooting star and we show how to efficiently compute it. 

\begin{figure}[htbp]\centering
	\includegraphics[page=2]{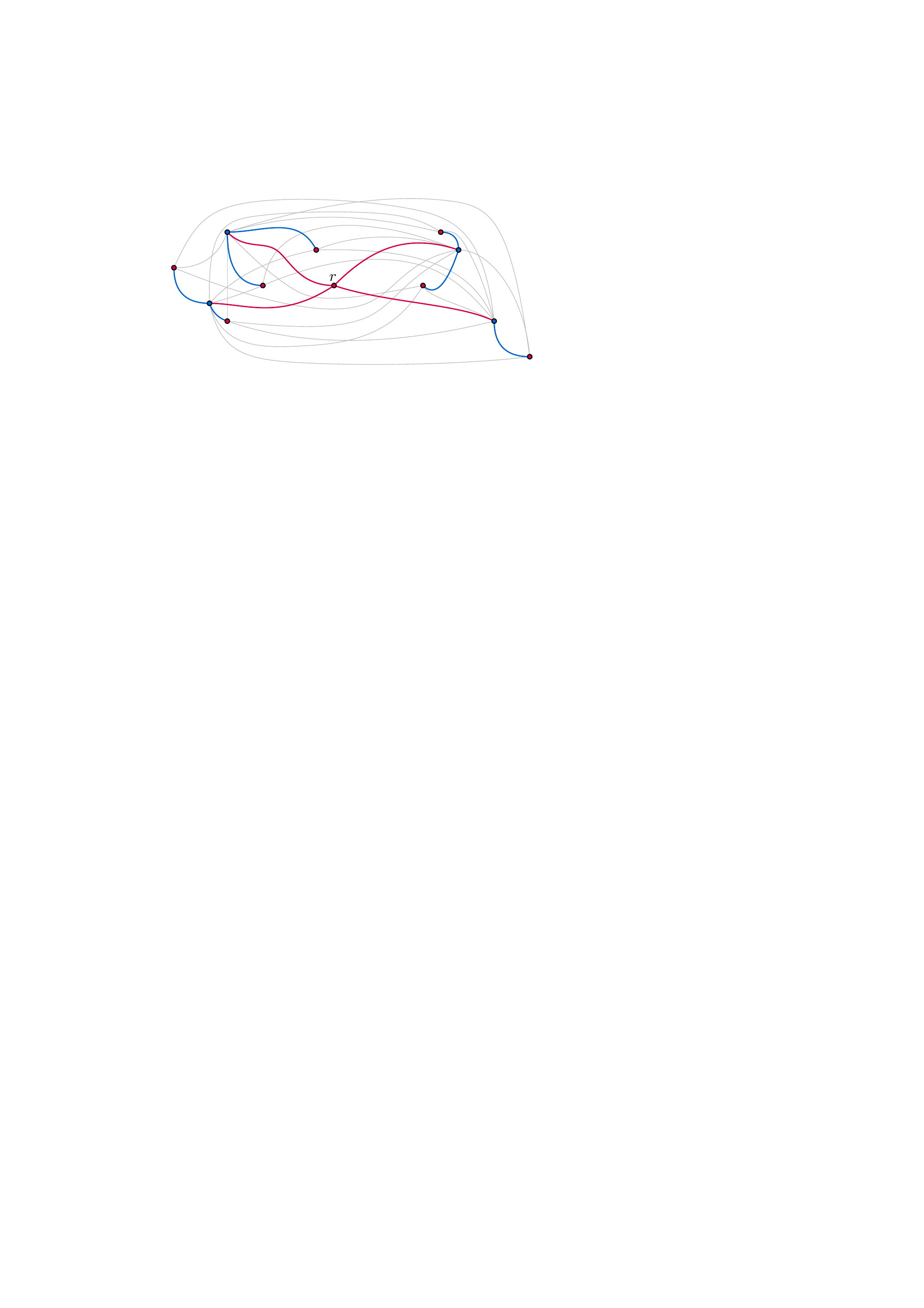}
	\caption{An example of a shooting star rooted at $r$ in a monotone drawing of~$K_{9,5}$.}
	\label{fig:monotone}
\end{figure}

\bibliography{shooting_stars.bib}

\newpage
\appendix

\section{Monotone Drawings}
\label{ap:monotone}

In this section we consider monotone drawings of $K_{m,n}$. 
We assume the information about the drawing is given as the rotation system (clockwise cyclic order of the edges around each vertex) together with the crossings sorted along each edge and the vertices sorted by $x$-coordinate. 
The \emph{linear separator} of a vertex $v$ in an $x$-monotone drawing is the vertical line going through the vertex. It separates the edges incident to $v$ into the set of edges going to vertices left of $v$ and the set of edges going to the vertices right of $v$. 
Note that the linear separators are implicitly given.

\begin{theorem}\label{thm:monotone}
Given a monotone drawing of the complete bipartite graph $K_{m,n}$ we 
	can compute a shooting star in linear time in the size of the input.
\end{theorem}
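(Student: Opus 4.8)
For Theorem~\ref{thm:monotone} the plan is to split the argument into two parts: first showing that the construction described in Section~\ref{sec:algo_remarks} always yields a shooting star, and then showing that it can be carried out by a single left-to-right plane sweep in time linear in the input size. Throughout I would use the structural fact that, since $r$ is the leftmost vertex, every star edge $rb_j$ of $T$ is an $x$-monotone curve defined on $[x_r,x_{b_j}]$, and that the edges of $T$ are pairwise non-crossing (they all share the endpoint $r$, and adjacent edges of a simple drawing do not cross). Consequently, along any vertical line the edges of $T$ appear in a fixed order, namely the rotation order at $r$, independent of where the line is placed. This observation is the backbone of both parts of the argument.

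For correctness, the combinatorial part is immediate: the construction keeps all $n$ edges of $T$ and attaches each red vertex $w\neq r$ as a leaf via a single edge to a blue vertex, so the result is automatically a spanning tree of height $2$ containing $T$. Hence the only thing left to verify is that this tree is \emph{plane}. Here I would first record the consequence of the fixed vertical order above: for a red vertex $w$, the edges of $T$ meeting the vertical line through $w$ that pass below $w$ form a prefix of the rotation order at $r$, so the upward and downward rays hit well-defined bounding edges $rb_u$ and $rb_d$, and $w$ lies in the region delimited by their initial arcs together with the vertical segment between the two first hits. The key lemma would then state that the edge selected by the rule crosses no edge of $T$ and that two selected edges never cross one another. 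I expect to prove this by distinguishing the same three cases as in the construction (both rays meet $T$, exactly one ray meets $T$, neither ray meets $T$) and arguing, via $x$-monotonicity and the first-hit property, that the selected edge stays confined to the face of the arrangement of $T$ that contains $w$. The role of the \emph{horizontally closest} rule is precisely to pick the candidate whose connecting edge cannot be forced to leave that face and cross $T$ or a neighbouring selected edge.

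For the running time, I would realise the construction as a sweep that processes the vertices in $x$-order (available from the input). The sweep maintains the currently active star edges in their fixed vertical order as a linked list: at a blue vertex $b_j$ the edge $rb_j$ is deleted, and at a red vertex $w\neq r$ the two bounding active edges $rb_u,rb_d$ are located, after which the rule selects one of $w$'s incident edges using the \emph{linear separators} and the $x$-coordinates of $b_u,b_d$. The comparisons needed to locate $w$ (is $w$ above or below $rb_j$ at $x_w$?) can be answered purely combinatorially from the rotation system and the given crossing orders along the edges, so no coordinates are required. To keep the total cost linear, and in particular free of a logarithmic factor, I would drive the location with a finger into the active list and charge its movement to edge deletions, so that each crossing and each edge is touched a bounded number of times.

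The main obstacle I anticipate is the planarity lemma, in particular its second half: that distinct selected edges never cross one another. The first-hit property controls each selected edge individually against $T$, but two edges emanating from different red vertices toward possibly interleaving or coinciding blue endpoints require a more global argument, and this is where the three-case split and the horizontally-closest tie-break must be combined carefully. On the algorithmic side, the delicate point is ensuring the location step is genuinely linear rather than $O(mn\log(mn))$; this is what forces the combinatorial, finger-based realisation of the sweep instead of a generic balanced search structure.
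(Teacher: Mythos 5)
Your construction and overall plan coincide with the paper's: the same three-case rule (one ray hits $T$, both do, neither does), the same use of $x$-monotonicity and the fixed vertical order of the star edges to confine each new edge $wb_w$ to the face of $T$ containing $w$, and a claim that $wb_w$ does not cross $T$ proved case by case. However, you explicitly leave unresolved exactly the step that the paper's proof hinges on, namely that two selected edges never cross each other, and your proposal does not contain the idea that closes it. The missing observation is twofold. First, the first-hit configuration is invariant along the whole edge $wb_w$, not just at $w$: shooting a vertical ray up or down from any point of $wb_w$ hits the same edge of $T$ first (or none), because $wb_w$ is disjoint from $T$ and ends at an endpoint of one of its bounding edges. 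Second, this configuration \emph{uniquely determines} the chosen blue endpoint: in case (ii) the bounding pair $rb_i, rb_j$ is only the first-hit pair for $x$-coordinates smaller than $\min(x_{b_i},x_{b_j})$, so the horizontally closest of $b_i,b_j$ to $w$ is always the one with smaller $x$-coordinate, the same for every red vertex with that bounding pair; in case (i) the single bounding edge fixes $b_w$; and in case (iii) the vertical line through $w$ misses all of $T$, which forces $x_w$ to exceed every $x_{b_j}$, so all such vertices choose the rightmost blue vertex. Hence if two edges of $M$ crossed, at the crossing point they would share the same configuration, therefore the same blue endpoint, therefore be incident --- and incident edges do not cross in a simple drawing. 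Without this, your "more global argument" remains a placeholder.

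On the algorithmic side you diverge from the paper: you propose a left-to-right sweep with a finger into the list of active star edges, whereas the paper selects $b_w$ locally from the rotation at $w$, as the first or last edge to the right of the linear separator that crosses no edge of $T$ (equivalently, the one among these with leftmost blue endpoint), read off from the given crossing lists. Your sweep needs above/below queries of the form "is $w$ above $rb_j$ at $x_w$?", which are not directly part of the combinatorial input (rotation system, sorted crossings, $x$-order of vertices); you assert these can be answered from that data but give no mechanism, and making the finger argument amortize to linear time is nontrivial. The paper's local rule sidesteps both issues, so if you keep your sweep you owe a concrete implementation of the location step.
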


\begin{proof}
	Let the sides of the bipartition be $R$ (red vertices) and $B$ (blue vertices), and without loss of generality assume the leftmost vertex is $r\in R$. 
	We denote the star of $r$ by $T$. 
	We construct a set $M$ of edges such that $T \cup M$ forms a shooting star rooted at $r$. 
	For each red vertex $w\in R$ not in $T$, we shoot two vertical rays, one up ($r\!\uparrow$) and one down ($r\!\downarrow$). 
	There are three possible cases (see Fig.~\ref{fig:monotoneproof} for an illustration): 
\begin{figure}[htbp]\centering
		\includegraphics[page=2]{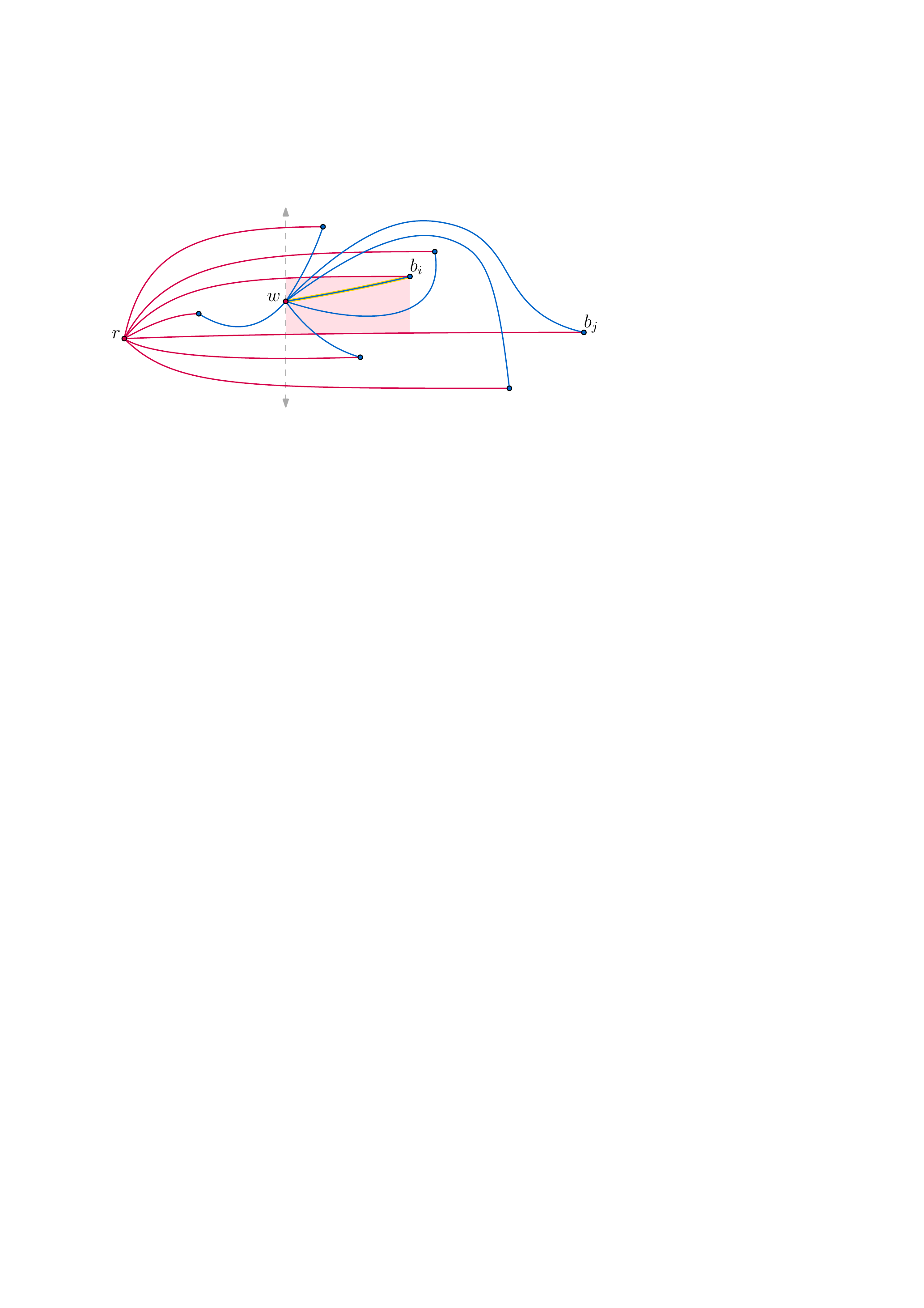}
		\caption{Construction of the shooting star in a monotone drawing. The star of $r$ ($T$) is shown in red. The edge incident to $w$ picked by the algorithm is highlighted. }
		\label{fig:monotoneproof}
	\end{figure}
\begin{itemize}
		\item[(i)] If only one of those vertical rays intersects $T$, 
		let $rb_i$ be the edge in $T$ producing the intersection that is closest to $w$, 
		that is, $b_i$ is the endpoint in~$B$ of the first intercepted edge in $T$. 
		We define $b_w := b_i$ and add $wb_w$ to $M$.
		\item[(ii)] If both vertical rays intersect $T$ we consider the endpoints $b_i$ and $b_j$ in $B$ of the first edge intercepted by $r\!\uparrow$ and by $r\!\downarrow$, respectively. 
		More precisely, $rb_i$ ($rb_j$) is the edge in $T$ producing the intersection with the upwards (resp. downwards) ray that is closest to $w$. 
		Let $b_w \in \{b_i, b_j\}$ be the point that is horizontally closest to $w$. 
		We add $wb_w$ to $M$.
		\item[(iii)] If neither of the rays intersects $T$, 
let $b_i$ be the horizontally closest vertex in~$B$. 
We define $b_w := b_i$ and add $wb_w$ to $M$. 
\end{itemize}
	
	We next prove that $T \cup M$ is indeed a shooting star and we show how to efficiently compute it. 
	The first part relies on the following claim:
	\begin{claim}
		Let $w\in R$ be a red vertex not in $T$. 
		The edge $wb_w$ does not cross $T$. 
	\end{claim}	
	\begin{proof}
		The proof for case (i) follows from the monotonicity property and the fact that the uncrossed ray, the part of the crossed ray until the first intersection, and the edge $rb_w$ cannot be crossed by any edge in $T$. 
		For case (ii), note that $wb_w$ must be contained in the region bounded by $rb_i$, $rb_j$, and the vertical lines through $w$ and $b_w$. 
		The boundary of this region cannot be crossed by any edge in $T$; see the shaded region in Fig.~\ref{fig:monotoneproof} for an illustration. 
		For case (iii), 
		since $r$ is the leftmost vertex, 
		$b_i$ is horizontally closer to $w$ than $r$.  
		Thus, 
		the horizontally closest point to $w$ in $T$ is $b_w = b_i$ and the statement follows immediately from monotonicity. 
		\qed
\end{proof}

	To prove that $T \cup M$ is indeed a shooting star it remains to show that no two edges in $M$ cross each other. 
	Consider a red vertex $w\in R$ in $T$ and the edge $wb_w$. 
By construction, the first intersection of a vertical ray up (or vertical ray down) from any point in the edge $wb_w$ with edges of $T$ is with the same edge if any. 
	This means that shooting a ray from any point in the edge $wb_w$ we find the same situation as the one that defines the case:  
	If $w$ falls under case (i), 
	the vertically closest edge in $T$ above or below $wb_w$ is, at any point, $rb_i$ and no other edge from $T$ lies on the opposite side;
	if $w$ falls under case (ii), 
	the vertically closest edges in $T$ above and below $wb_w$ are, at any point, $rb_i$ and $rb_j$, respectively; 
	and if $w$ falls under case (iii), 
	no edge from $T$ is above or below $wb_w$ at any point.
This implies that no two edges in $M$ can cross, since by the definition of simple drawings incident edges do not cross.

	For the algorithmic part, note that if $w$ fall under case (iii) this is easy to detect and $wb_w$ is easy to compute just using the horizontal sorting of the points (and the existence of crossings with $T$). 
	Otherwise, we consider the edges incident to $w$ on the right side of the linear separator sorted clockwise around $w$ (starting the sweeping from a vertical up direction). 
	The edge $wb_w$ is either the first or the last such edge that does not cross any edge from $T$. 
	More precisely, among those it is the one with the leftmost blue endpoint. 
	This allows to efficiently compute $M$ and therefore the shooting star.
	\qed
\end{proof}

\end{document}